\def\BibTeX{{\rm B\kern-.05em{\sc i\kern-.025em b}\kern-.08em
    T\kern-.1667em\lower.7ex\hbox{E}\kern-.125emX}}
\newtheorem{theorem}{Theorem}
\newtheorem{lemma}{Lemma}
\title{Co-Optimizing Distributed Energy Resources under Demand Charges and Bi-Directional Power Flow}
\newcommand{\linebreakand}{%
      \end{@IEEEauthorhalign}
      \hfill\mbox{}\par
      \mbox{}\hfill\begin{@IEEEauthorhalign}
    }
\author{%
\IEEEauthorblockN{Ruixiao Yang\textsuperscript{*}\orcidlink{0000-0002-2852-0150}, Gulai Shen\textsuperscript{*}\orcidlink{0000-0002-6016-0159}, Ahmed S. Alahmed\orcidlink{0000-0002-4715-4379}, Chuchu Fan\orcidlink{0000-0003-4671-233X}}

\thanks{\textsuperscript{*}Equal contribution.}
}
\date{}
\begin{document}
\maketitle
\begin{abstract}
    We address the co-optimization of behind-the-meter (BTM) distributed energy resources (DER), including flexible demands, renewable distributed generation (DG), and battery energy storage systems (BESS) under net energy metering (NEM) frameworks with demand charges. We formulate the problem as a stochastic dynamic program that accounts for renewable generation uncertainty and operational surplus maximization. Our theoretical analysis reveals that the optimal policy follows a threshold structure. Finally, we show that even a simple algorithm leveraging this threshold structure performs well in simulation, emphasizing its importance in developing near-optimal algorithms. These findings provide crucial insights for implementing prosumer energy management systems under complex tariff structures.
\end{abstract}
\begin{IEEEkeywords}
Battery storage systems, distributed energy
resources, dynamic programming, energy management systems,
flexible demands, Markov decision process, net energy metering.
\end{IEEEkeywords}
\section{Background and Literature}
A clean, resilient, and sustainable energy system requires optimal coordination of  BTM DER, such as DG, flexible demands, and BESS \cite{akorede_distributed_2010}. Economic and technical aspects are the main drivers for BTM DER operation and coordination. Under NEM, {\em prosumers} can bi-directionally transact energy and money with the distribution utility (DU), who levy a {\em buy} (retail) rate or a {\em sell} (export) rate based on net-consumption -- gross demand minus gross generation  \cite{Alahmed&Tong:22IEEETSG}. 

With the increasing penetration of new and large electric loads such as electric vehicles, electric ovens, and heat pumps, DUs are increasingly supplementing volumetric charges\footnote{Charges based on the total net consumption of energy.} with {\em demand charges}. These are defined as \$/kW fees imposed on the peak demand recorded during a billing cycle and are introduced to encourage smoother load profiles and reduce the operational costs of grid maintenance \cite{jin_optimal_2017}.\footnote{Demand charges differ from {\em critical peak pricing}, a form of time-varying pricing aimed at incentivizing load shedding during critical periods \cite{Neufeld:87JEH}.} With the growing adoption of demand charges, several studies have highlighted the role of BTM DER in helping customers mitigate them \cite{StorageDemandCharge:17NREL}.

This work focuses on the co-optimization of BTM DER, including flexible demand, renewable DG, and BESS, under NEM frameworks with differentiated buy and sell rates and demand charges (Fig. \ref{fig:der_setup}). A comparative overview of related studies in terms of demand flexibility, storage integration, NEM capabilities, and demand charges is provided in Table \ref{tab:comparison}. To the best of our knowledge, this work is the first to co-optimize flexible demand and storage operations under the combined presence of NEM and demand charges.

Previous research has addressed aspects of this problem. In \cite{Harshah&Dahleh:15TPS,Zhang&Leibowicz&Hanasusanto:20TSG}, dynamic programming was employed to minimize prosumer costs by optimally scheduling BESS operation under inelastic demand. Also, the co-optimization of demand flexibility and BESS was considered in \cite{Chen&Wang&Heo&Kishore:13TSG,Khezri&Mahmoudi&Haque:21TSE}, though without incorporating the ability of prosumers to buy and sell energy under NEM. Studies such as \cite{Li&Dong:19TSG,Xu&Tong:17TAC,alahmed_co-optimizing_2024,Jeddi&Mishra&Ledwich:20TSE} incorporate the NEM buy-and-sell feature, optimizing storage scheduling in \cite{Li&Dong:19TSG}, and jointly optimizing storage and flexible demand in \cite{Xu&Tong:17TAC,alahmed_co-optimizing_2024,Jeddi&Mishra&Ledwich:20TSE}. However, these works ignore demand charges, which significantly complicates the scheduling problem.

In contrast, while \cite{jin_optimal_2017, Luo&King&Ranzi&Dong:20TSG} integrate demand charges in their optimization models, the assumed free disposal of surplus renewables eliminates NEM's bidirectional energy transactions, which is a critical limitation given the growing real-world prevalence of such interactions.

\textbf{Contributions}. This paper advances research on optimal scheduling of BTM DERs with three key contributions:
\begin{enumerate}
    \item We develop a stochastic dynamic programming framework that jointly optimizes demand scheduling and BESS under NEM and demand charge constraints, maximizing prosumer operational surplus.
    \item We prove that the optimal policy maintains a threshold structure consistent with prior work while demonstrating how integrating multiple factors—demand flexibility, storage, and demand charges—leads to exponential computational complexity in the number of flexible demands.
    \item Through extensive simulations against baseline approaches, we demonstrate how leveraging the threshold structure properties enables the development of computationally efficient near-optimal algorithms.
\end{enumerate}



\begin{table}
    \centering
    \renewcommand{\arraystretch}{1.3} 
    \caption{Sample of Related work on home energy management.}
    \label{tab:comparison}
    \resizebox{\columnwidth}{!}{ 
    \begin{tabular}{@{}>{\centering\arraybackslash}m{2cm}cccc@{}}
        \toprule
        \textbf{Related work} & \textbf{Flexible demands} & \textbf{Storage} & \textbf{NEM (Buy/Sell)} & \textbf{Demand charge} \\ 
        \midrule
\cite{Harshah&Dahleh:15TPS,Zhang&Leibowicz&Hanasusanto:20TSG} & & \checkmark &&\\
\cite{Chen&Wang&Heo&Kishore:13TSG,Khezri&Mahmoudi&Haque:21TSE} & \checkmark & \checkmark &  &  \\ 
        \cite{Li&Dong:19TSG} &  & \checkmark & \checkmark & \\ 
        \cite{jin_optimal_2017} &  & \checkmark &  & \checkmark \\ 
        \cite{Luo&King&Ranzi&Dong:20TSG}& \checkmark & \checkmark &  & \checkmark\\ 
        \cite{Xu&Tong:17TAC,alahmed_co-optimizing_2024, Jeddi&Mishra&Ledwich:20TSE}& \checkmark & \checkmark & \checkmark & \\ 
        \textbf{This work} & \checkmark & \checkmark & \checkmark & \checkmark \\ 
        \bottomrule
    \end{tabular}
    }
\end{table}


\section{Problem Formulation}
\label{sec:problem}
\begin{figure}
    \centering
    \includegraphics[scale=0.27]{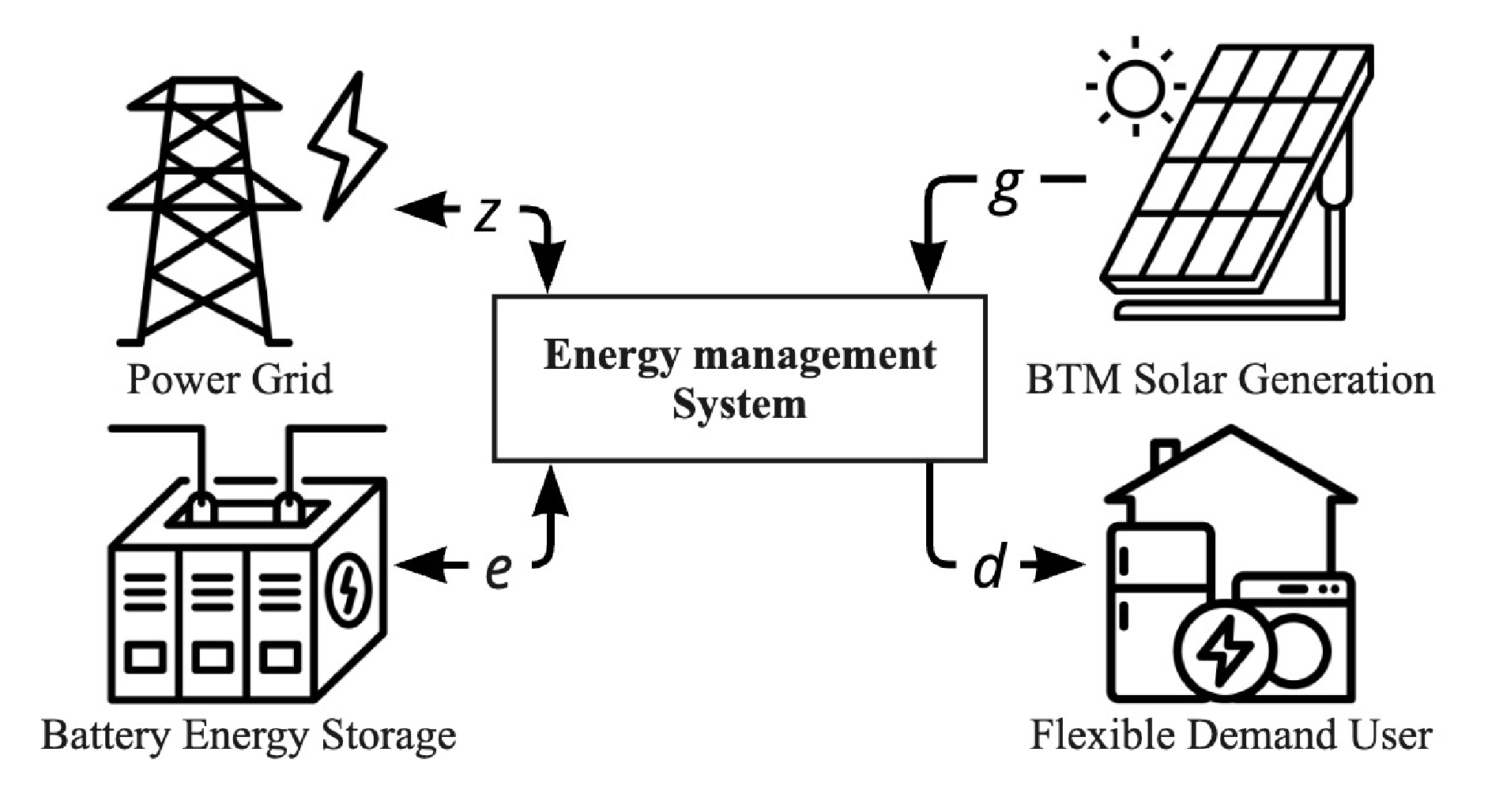}
    \caption{Distributed energy resource setup: flexible demands $d\in \mathbb R_+$, renewable DG $g\in\mathbb R_+$, storage operation $e\in \mathbb R$, and net-consumption $z\in \mathbb R$.}
    \label{fig:der_setup}
\end{figure}

We study an energy management system (EMS) for prosumers that co-optimizes flexible demand, storage operation, and renewable distributed generation under a NEM tariff with distinct buy/sell rates and peak demand charges. The optimization objective maximizes the net benefit defined as consumption utility minus electricity costs, where demand charges are levied on the maximum consumption drawn from either load, battery charging, or both. As shown in Fig.~\ref{fig:der_setup}, this dynamic programming formulation explicitly accounts for the temporal coupling of storage operations and an NEM tariff with bidirectional energy pricing and peak demand charges.

\subsection{Prosumer Resources}
We consider the operation over a discrete and finite time horizon $t=0,1,\ldots, T$. As shown in Fig.\ref{fig:der_setup}, the prosumer has {\em BTM renewable DG}, denoted by $g_t$. The renewable DG is modeled as an exogenous (positive) Markov random process.

The {\em flexible demands} are denoted by the vector 
\begin{equation}\boldsymbol{d}_t=(d_t^1, \ldots, d_t^K)\in\mathcal{D}\coloneqq\{\boldsymbol{d}:\boldsymbol{0}\preceq \boldsymbol{d}\preceq\overline{\boldsymbol{d}}\}\subseteq\mathbb R^K_+,
\end{equation}
where $K$ is the total number of prosumer devices. Here $\overline{\boldsymbol{d}}$ is the consumption bundle’s upper limit.

The prosumer also owns a BTM battery energy storage, which has output $e_t\in[-\underline{e}, \overline{e}]$, where $\underline{e}$ and $\overline{e}$ are discharging and charging limits. The output is decomposed into charging $[e_t]^+=\max\{e_t, 0\}$ and discharging $[e_t]^-=\max\{-e_t, 0\}$. With charge/discharge efficiencies $\tau,\rho\in[0, 1]$, the state of charge (SoC) $s_t\in[0, B]$ evolves as: 
\begin{equation}\label{eq:SoC}
    s_{t+1} = s_t + \tau [e_t]^+-[e_t]^-/\rho,\quad t=0,\ldots, T-1,
\end{equation} 
where $B$ denotes battery capacity.

The {\em net energy consumption} of the prosumer at time $t$ is
\begin{equation}
    z_t := \bm{1}^\top\boldsymbol{d}_t + e_t - g_t.
\end{equation} 
Therefore, the prosumer imports power from the grid if $z_t>0$ and exports power to the grid if $z_t<0$.

\subsection{System Dynamics}
The {\em system state} at time $t$, $x_t$, is described as 
\begin{equation}
    x_t = [s_t, g_t, c_t],
\end{equation}
where $c_t$ is the prosumer's peak demand before time $t$. We allow the system to be controlled by battery charging/discharging and consumption adjustments, i.e., the control action is
\begin{equation}
    u_t \coloneqq (e_t, \boldsymbol{d}_t).
\end{equation} 

 The peak demand in the state evolves as
\begin{equation}
    c_{t+1} = \max_{0\le i\le t}z_i=\max\{z_t,c_t\}, \quad t=0,1,\ldots, T-1,
\end{equation}
and the battery SoC evolves as per (\ref{eq:SoC}).

\subsection{NEM X Tariff with Demand Charges}
Under NEM, when the prosumer's gross loads (from consumption and storage charging) are higher than its generation (from renewable DG), the prosumer pays at the retail rate. Otherwise, the prosumer gets compensated at the export rate.

Let $(p_t^+\ge p_t^-, p) \ge 0$ be the retail, export, and peak demand prices, respectively. We equivalently break down the peak demand charge at the end of time $T$ to each time step $t$ since $p c_T = p\sum_{t=0}^{T-1}[z_t-c_t]^+$. Thus, we can write the payment (cost) under NEM X in \cite{Alahmed&Tong:22IEEETSG} for each time $t$ as 
\begin{equation}\label{eq:payment}
    P_t(z_t, c_t) = p_t^+[z_t]^+ - p_t^-[z_t]^- + p[z_t-c_t]^+,
\end{equation}
where $P_t(\cdot)>0$ ($P_t(\cdot)<0$) signifies a payment by the prosumer (DU) to the DU (prosumer).

\subsection{Flexible Demands and Storage Co-Optimization}
At each time step $t$, the prosumer's utility of consuming $\boldsymbol{d}_t$ is given by an additive utility function
\begin{equation}\label{eq:utility}
    U_t(\boldsymbol{d}_t) = \sum_{i=1}^K U_t^i(d_t^i), \quad t=0,1,\ldots, T-1
\end{equation}
that is assumed to be concave and continuously differentiable.

We define the reward function of the dynamic program as the difference between satisfaction from consumption (\ref{eq:utility}) and payment (\ref{eq:payment}). Therefore, the reward function is
\begin{equation}
    r_t(x_t, u_t) = U_t(\boldsymbol{d}_t) - P_t(z_t, c_t), \quad t=0,\ldots, T-1.
\end{equation}
To encourage the preservation of energy in the battery, we define a linear reward to the remaining energy at step $T$, {\em i.e. }
\begin{equation}
    r_T(x_T) = \gamma s_T,
\end{equation}
for a $\gamma>0$. The salvage value rate $\gamma$ influences optimal DER scheduling. Here, we assume $p^+_t \geq \gamma \geq p^-_t, \forall t$, deferring analysis of $\gamma$'s impact to future work.

The policy $\pi=(\pi_0, \ldots, \pi_{T-1})$ is a sequence of mapping from state space to action space such that $\pi_t(x_t) = u_t$. Given initial state $x_0=(s, g, 0)$, the storage-consumption co-optimization is then defined as:
\begin{subequations}
\begin{align}
    \max_{\pi}\quad& \mathbb E\left[\sum_{t=0}^{T}r_t(x_t, u_t)\right]\\
    \text{Subject to}&\quad \text{for all}\ t=0, 1,\cdots, T-1\nonumber\\
    & x_0 = (s, g, 0)\\
    &u_t=(e_t, \boldsymbol{d}_t)=\pi_t(x_t)\\
    & e_t\in[-\underline{e}, \overline{e}] \\
    & s_{t+1} = s_t + \tau[e_t]^+-[e_t]^-/\rho \label{SoCevolution}\\
    & 0 \leq s_t \leq B\\
    & g_{t+1}\sim F_{\cdot |g_t} \\
    & c_{t+1}=\max\{\boldsymbol{1}^\top\boldsymbol{d}_t+e_t - g_t, c_t\} \label{DCevolution}\\
    & \boldsymbol{0} \preceq \boldsymbol{d}_t\preceq \boldsymbol{\overline{d}}.
\end{align}
\end{subequations}
Note: (\ref{SoCevolution}) and (\ref{DCevolution}) introduce strong temporal dependencies.

The value function $V_t^\pi(\cdot)$ is defined as the total reward received under the policy $\pi$, i.e., $V_t^\pi(x) = \mathbb E[\sum_{t=0}^T r_t(x_t, \pi(x_t))|x_t=x]$. The optimal value function is defined as $V_t^*(x)=\max_\pi V_t^\pi(x)$.

\section{Dynamic Program Analytical Results}
\label{sec:method}
In this section, we show that the optimal policy for the co-optimization problem in \autoref{sec:problem} is a threshold policy.

The following lemma establishes a conclusion regarding the concavity of the optimal value function.
\begin{lemma}\label{lem: concave}
    The optimal value function $V^*_t(s_t, g_t, c_t)$ is concave in $(s_t, c_t)$ for every time step $t\le T$. 
\end{lemma}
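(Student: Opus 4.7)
The plan is to prove \autoref{lem: concave} by backward induction on $t$, carrying a strengthened induction hypothesis that tracks monotonicity as well as concavity. The natural device is the shifted function $W_t(s,g,c) := V_t^*(s,g,c) - p\,c$, which corresponds to reverting the paper's per-step decomposition of the peak demand charge back to a single terminal charge. Since $V_t^* = W_t + p\,c$, concavity of $V_t^*$ in $(s,c)$ is equivalent to concavity of $W_t$, so it suffices to prove the stronger statement that $W_t$ is (i) concave in $(s,c)$, (ii) non-decreasing in $s$, and (iii) non-increasing in $c$, for every $t \le T$.

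The base case $t=T$ is immediate: $W_T(s,g,c) = \gamma s - p c$ is linear, non-decreasing in $s$, and non-increasing in $c$. For the inductive step, substituting $V_{t+1}^* = W_{t+1} + p\,c_{t+1}$ into the Bellman equation and applying the identity $-p[z_t - c_t]^+ + p\max(z_t, c_t) = p\,c_t$ yields the cleaned recursion
\[
W_t(s,g,c) = \max_{(e_t,\boldsymbol{d}_t)} \Big\{ U_t(\boldsymbol{d}_t) - p_t^+[z_t]^+ + p_t^-[z_t]^- + \mathbb{E}\big[W_{t+1}(s_{t+1}, g_{t+1}, \max(z_t, c))\big] \Big\}.
\]
I would then verify joint concavity of the bracketed objective in $(s, c, e_t, \boldsymbol{d}_t)$: the utility is concave by assumption; $-p_t^+[z_t]^+ + p_t^-[z_t]^-$ is piecewise-linear with non-increasing slope at $z_t=0$ (since $p_t^+ \ge p_t^-$), hence concave in the affine variable $z_t$; the storage dynamics $s_{t+1} = s_t + \tau[e_t]^+ - [e_t]^-/\rho$ are concave in $(s_t, e_t)$ (the slope drops from $1/\rho$ to $\tau$ at $e_t=0$), which composes with the concave non-decreasing $W_{t+1}$ in $s'$; and $\max(z_t, c)$ is convex in $(z_t, c)$, which composes with the concave non-increasing $W_{t+1}$ in $c'$ to give a concave function. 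Expectation is a linear operation, and partial maximization over the convex feasible set (all storage and demand constraints are linear in $(s, u_t)$) preserves concavity, yielding property (i); properties (ii) and (iii) then propagate via standard envelope arguments.

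The main obstacle, and the motivation for introducing the shift, is the peak-demand dynamic $c_{t+1} = \max(z_t, c_t)$. Composition of a concave function with a convex function preserves concavity only when the outer function is monotone non-increasing, but the unshifted $V_t^*$ is in fact non-decreasing in $c$ under the paper's per-step decomposition---a larger prior peak reduces future incremental demand charges $p[z_s - c_s]^+$, so direct composition of $V_{t+1}^*$ with the $\max$ fails to preserve concavity. The shift $W_t = V_t^* - p\,c$ absorbs exactly this upward drift (equivalently, $\partial V_t^*/\partial c \le p$), restoring the non-increasing monotonicity in $c$ that the max-composition requires. Once this reformulation is in place, every remaining step reduces to standard composition and partial-maximization rules for concave functions.
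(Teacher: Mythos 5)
Your proposal is correct, but it takes a genuinely different route from the paper's. The paper argues forward along trajectories: it takes the optimal trajectories from two initial states, applies the averaged action sequence starting from the averaged state, and shows the resulting reward dominates the average of the two optimal rewards; the demand-charge term is handled by telescoping $\sum_t p[z_t-c_t]^+$ back into $p(\max_t z_t - c_\tau)$, which is jointly convex in the whole trajectory. You instead run backward induction on the Bellman recursion with the shifted value $W_t = V_t^* - p\,c$, which cancels $-p[z_t-c_t]^+$ against $p\max(z_t,c_t)$ and, crucially, converts $V_t^*$ (non-decreasing in $c$) into a function that is non-increasing in $c$, so that composition with the convex update $c_{t+1}=\max(z_t,c_t)$ preserves concavity; this is exactly the obstruction a naive backward induction on $V_t^*$ would hit, and your diagnosis of it is right. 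Your route also buys a more careful treatment of the storage dynamics: the paper implicitly assumes the averaged actions reproduce the averaged SoC trajectory and remain feasible, whereas in fact $\tilde s_{t+1} \ge (s_{t+1}+s_{t+1}')/2$ because $e\mapsto \tau[e]^+-[e]^-/\rho$ is only concave, and your explicit ``concave non-decreasing outer function composed with concave inner function'' step handles this cleanly. Two small points to tighten: the constraint $0\le s_{t+1}\le B$ is piecewise-linear, not linear, in $(s_t,e_t)$, so before invoking preservation of concavity under partial maximization you should verify directly that the feasible set is convex (it is, using $\tau\le 1\le 1/\rho$ together with $s_t\le B$, but the sublevel set of a concave function is not convex in general, so this is not automatic); and the monotonicity claims (ii)--(iii) in your strengthened hypothesis need their own short inductive verification rather than a bare appeal to envelope arguments --- (iii) is immediate since $\max(z_t,c)$ is non-decreasing in $c$ and $W_{t+1}$ is non-increasing in its last argument, while (ii) requires the case analysis the paper carries out in its monotonicity lemma for actions that become infeasible at a higher state of charge.
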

\begin{proof}
We follow the proof technique for lemma 1 in \cite{jin_optimal_2017}. 

Consider two different states $x_\tau$ and $x_\tau'$ with their corresponding optimal trajectories $\{(x_t, u_t)\}_{t=\tau}^T$ and $\{(x_t', u_t')\}_{t=\tau}^T$. Let $\tilde x_\tau=(x_\tau+x_\tau')/2$ be their average initial state, and define the average policy $\tilde \pi=\{\tilde u_t=(u_t+u_t')/2\}_{t=\tau}^T$ with corresponding trajectory $\{\tilde x_t\}_{t=\tau}^T$.

First, by the concavity of $U_t$:
\begin{equation}
    (U_t(\boldsymbol{d}_t) + U_t(\boldsymbol{d}_t'))/2\le U_t((\boldsymbol{d}_t+\boldsymbol{d}_t')/{2})=U_t(\tilde{\boldsymbol{d}}_t).\label{eq: u}
\end{equation}
The cost $P_t(z_t, c_t)$ consists of two components: the energy cost $p_t^+[z_t]^+-p_t^-[z_t]^-$ and the demand charging $p[z_t-c_t]^+$. Since $p_t^+\ge p_t^-\ge0$, the energy cost is convex. Then, 
\begin{equation}
\begin{aligned}
    &-\sum_{t=\tau}^T(p_t^+[z_t]^+-p_t^-[z_t]^-+p_t^+[z_t']^+-p_t^-[z_t']^-)/ 2\\
    &\le -\sum_{t=\tau}^T p_t^+[(z_t + z_t')/2]^+-p_t^-[(z_t+z_t')/2]^-\\
    &=-\sum_{t=\tau}^Tp_t^+[\tilde z_t]^+-p_t^-[\tilde z_t]^-.\label{eq: energy}
\end{aligned}
\end{equation}
And the demand charging cost 
\begin{equation}
    \begin{aligned}
        &-\sum_{t=\tau}^T(p[z_t-c_t]^++ p[z_t'-c_t']^+)/2 \\
        =&-p(\max_{\tau\le t\le T}\{z_t\}-c_\tau+\max_{\tau\le t\le T}\{z_t'\}-c_\tau')/2\\
        \le&-p(\max_{\tau\le t\le T}\{(z_t + z_t')/2\} - (c_\tau+c_\tau')/2)\\
        =&-p(\max_{\tau\le t\le T}\{\tilde z_t\} - \tilde c_\tau)=\sum_{t=\tau}^Tp[\tilde z_t - \tilde c_t]^+.
    \end{aligned}\label{eq: charge}
\end{equation}
Combining (\ref{eq: u}), (\ref{eq: energy}), (\ref{eq: charge}), 
\begin{equation}
    V^*_\tau(\tilde x_\tau)\ge V_\tau^{\tilde\pi}(\tilde x_\tau)\ge (V_\tau^*(x_\tau) + V_\tau^*(x_\tau'))/2,
\end{equation}
which proves the concavity of the optimal value function. 
\end{proof}

It turns out that the optimal value function is also non-decreasing in the system state.
\begin{lemma}
    The optimal value function $V_t^*(s_t,g_t,c_t)$ is non-decreasing in $s_t$, $g_t$, and $c_t$ for every time step $t\le T$.
\end{lemma}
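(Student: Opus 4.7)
My plan is backward induction on $t$, carrying two clauses simultaneously: (i) $V_t^*(s,g,c)$ is non-decreasing in each of $s$, $g$, $c$; and (ii) $V_t^*$ is $p$-Lipschitz in $c$, i.e., $0\le V_t^*(s,g,c+\delta)-V_t^*(s,g,c)\le p\delta$ for every $\delta\ge 0$. The Lipschitz clause has to travel alongside monotonicity because the couplings I will use to handle $s$ and $g$ perturb actions in ways that can shrink the next-step peak $c_{t+1}$, and I need a tight control on the resulting loss in $V_{t+1}^*$ to close the argument.

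The base case $t=T$ is immediate from $V_T^*=\gamma s_T$. For the induction step, I would first settle both clauses in $c$, then leverage them for $s$ and $g$. Monotonicity in $c$: starting from $(s,g,c+\delta)$ and applying the action optimal at $(s,g,c)$ keeps $z$, $s_{t+1}$, and $U_t$ unchanged, weakly reduces $p[z-c]^+$, and weakly increases $c_{t+1}=\max(c,z)$, so IH (i) in $c$ closes the case. For the Lipschitz upper bound, start from $(s,g,c)$ and apply the action optimal for $(s,g,c+\delta)$; a case split on $z$ relative to $c$ and $c+\delta$ shows the immediate demand-charge increase $p\Delta_1$ and the drop $\Delta_2$ in next-step peak satisfy $\Delta_1+\Delta_2=\delta$, which combined with IH (ii) caps the total loss at $p\delta$.

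For monotonicity in $s$, take $s<s'$ and the optimal $(e^*,\boldsymbol d^*)$ at $(s,g,c)$. If $s'+\tau[e^*]^+-[e^*]^-/\rho\le B$, the same action is feasible at $(s',g,c)$: identical immediate reward, larger $s_{t+1}$, and IH (i) finishes. Otherwise $e^*>0$ and $s'+\tau e^*>B$; I modify to $e':=(B-s')/\tau$ while keeping $\boldsymbol d^*$ fixed, which shifts $z$ down by $\beta:=e^*-e'$, only improving the immediate reward (since $-P_t$ is non-increasing in $z$) and forcing $s_{t+1}'=B$. The next peak may shrink, but by at most $\beta$, and a case analysis on $z$ versus $c$ shows the current payment gain is always at least $p$ times that shrinkage, so IH (ii) lets me absorb the future loss into the immediate gain. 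Monotonicity in $g$ follows from a parallel coupling: the same action applied at higher $g$ shifts $z$ down by $\Delta g$, and the inequality closes in exactly the same way, modulo a stochastic-monotonicity assumption on the Markov kernel $F_{\cdot|g_t}$ so that $\mathbb E[V_{t+1}^*(s',g_{t+1},c')]$ is non-decreasing in $g_t$.

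The main obstacle I anticipate is precisely the tension between perturbing $s$ or $g$ and the evolution of $c_{t+1}$: any action modification that benefits the current payment tends to shrink the next-step peak, which by monotonicity in $c$ would appear to damage future value. The resolution is exactly the $p$-Lipschitz clause carried inside the induction, which certifies that every dollar of immediate savings more than pays for the induced future loss; verifying this inequality in each regime of $z$ relative to $c$ (and, for $g$, pushing the stochastic-monotonicity assumption through the expectation) is where most of the bookkeeping lives.
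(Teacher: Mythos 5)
Your argument is correct, but it takes a genuinely different route from the paper. The paper works at the trajectory level: for each coordinate it takes the entire optimal action sequence of the worse state, perturbs it at (at most) one time step, and exploits the telescoping identity $\sum_{\tau\ge t}p[z_\tau-c_\tau]^+=p[\max_{\tau\ge t}z_\tau-c_t]^+$, under which the total demand charge is directly non-decreasing in each $z_\tau$ and non-increasing in $c_t$ --- so the "future peak shrinks, future value drops" tension you worry about never surfaces as a separate term. Your backward-induction, one-step-coupling version must confront that tension head on, and your resolution --- carrying the $p$-Lipschitz-in-$c$ invariant and checking that the immediate demand-charge saving equals exactly $p$ times the shrinkage of $c_{t+1}$ (since $[z-c]^+-[z-\beta-c]^+=\max(z,c)-\max(z-\beta,c)$) --- is sound and closes every case. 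Your route buys two things the paper's proof glosses over: (i) a clean treatment of the upper SoC bound (the paper fixes feasibility only against the discharge limit $-\underline{e}$ by shifting an extra discharge to some later $\tau_0$, but the intermediate SoCs $s'_\tau+\epsilon$ may exceed $B$ before $\tau_0$; your device of clipping the charge to $e'=(B-s')/\tau$ at the current step handles this correctly), and (ii) an explicit acknowledgment that monotonicity in $g_t$ requires stochastic monotonicity of the kernel $F_{\cdot|g_t}$, an assumption the paper's proof silently uses when it compares only the time-$t$ payment and ignores that the distribution of future generation changes with $g_t$. What the paper's approach buys in exchange is brevity: the telescoping of the demand charge makes the monotonicity of the total cost in $(z_\tau)_\tau$ and $c_t$ immediate, with no auxiliary Lipschitz clause to propagate.
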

\begin{proof}
    (1) $s_t$: Consider two SoCs $s_t$, $s_t'$ such that $s_t=s_t'+\epsilon$, $0<\epsilon\ll \underline{e}$. Let $\{u_\tau'=(e_\tau', \boldsymbol{d}_\tau')\}_{\tau=t}^T$ be the optimal action sequence for SoC $s_t'$. If there exists $t\le \tau_0< T$, $e_{\tau_0}'\ge -\underline{e}+\epsilon$, action sequence $\{u_\tau=(e_\tau'-\epsilon\cdot\mathbf 1[\tau=\tau_0], \boldsymbol{d}_\tau')\}_{\tau=t}^T$ is feasible for SoC $s_t$. Hence $V^*_t(s_t, g_t, c_t)-V^*_t(s_t',g_t,c_t)\ge P_{\tau_0}(z_{\tau_0}', c_{\tau_0}')-P_{\tau_0}(z_{\tau_0}'-\epsilon, c_{\tau_0}')\ge 0$. Otherwise, $e_\tau'<-\underline{e}+\epsilon<0$, $t\le\tau\le T$, so action sequence $\{u_\tau'\}_{\tau=t}^T$ is feasible for SoC $s_t$, $V^*_t(s_t, g_t, c_t)-V^*_t(s_t',g_t,c_t)\ge \gamma(s_T-s_T')=\gamma\epsilon>0$.    
    
    (2) $g_t$: Consider two generations $g_t=g_t'+\epsilon$, $\epsilon>0$. Let $\pi'=\{u_\tau'=(e_\tau', \boldsymbol{d}_\tau')\}_{\tau=t}^T$ be the optimal action sequence for generation $g_t'$. Apply the action sequence to generation $g_t$, $z_{t+1}=e_t'+\mathbf{1}^\top\boldsymbol{d}_t'-g_t=z_{t+1}'-\epsilon$, $V_t^*(s_t, g_t, c_t)-V_t^*(s_t, g_t', c_t)\ge P_t(z_t', c_t) - P_t(z_t'-\epsilon, c_t)\ge 0$.     
    
    (3) $c_t$: From (\ref{eq:payment}), $P_t(z_t,c_t)$ is non-increasing in $c_t$, hence $V_t^*(s_t,g_t, c_t)$ is non-decreasing in $c_t$.
\end{proof}

Lastly, we derive a property that links the optimal Q-function, defined as $Q_t^*(x_t, u_t) = r(x_t, u_t) + \mathbb E[V_{t+1}^*(x_{t+1})]$, to the system's control action $u_t$.
\begin{lemma}\label{lem:Qfunction}
The optimal Q-function $Q_t^*(x_t, u_t)$ is concave in $u_t$.     
\end{lemma}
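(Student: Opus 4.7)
The plan is to use a price-shift reformulation that absorbs the stage demand charge into the continuation value, so the Q-function becomes a sum of terms each visibly concave in $u_t$. The obstacle is structural: $c_{t+1}=\max\{z_t,c_t\}$ is convex in $u_t$, while by Lemma 2 the continuation $V_{t+1}^*$ is non-decreasing in $c_{t+1}$, so a naive composition of its concavity with the convex peak update fails.

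First, I would define the shifted value $\hat V_{t+1}^*(s,g,c):=V_{t+1}^*(s,g,c)-pc$. Subtracting a linear function preserves joint concavity in $(s,c)$ (Lemma 1) and monotonicity in $s$ (Lemma 2). The technical heart of the argument is showing $\hat V_{t+1}^*$ is \emph{non-increasing} in $c$, equivalently $V_{t+1}^*(s,g,c+\delta)-V_{t+1}^*(s,g,c)\le p\delta$ for every $\delta>0$. This follows by applying the policy optimal for $(s,g,c+\delta)$ to the state with initial peak $c$: every stage reward is identical except the demand charge, and the telescoping used in the proof of Lemma 1 shows the horizon's total demand charge equals $p[\max_\tau z_\tau-c]^+$, which differs from $p[\max_\tau z_\tau-(c+\delta)]^+$ by at most $p\delta$ via $1$-Lipschitzness of $[\cdot]^+$.

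With the shift in hand, using $c_{t+1}=c_t+[z_t-c_t]^+$ the contribution $p(c_{t+1}-c_t)$ exactly cancels the stage demand charge $-p[z_t-c_t]^+$ in $r_t$, leaving
\begin{equation*}
Q_t^*(x_t,u_t)=U_t(\boldsymbol{d}_t)-p_t^+[z_t]^++p_t^-[z_t]^-+pc_t+\mathbb{E}\bigl[\hat V_{t+1}^*(s_{t+1},g_{t+1},c_{t+1})\bigr].
\end{equation*}
The utility is concave by assumption; the energy-cost term is concave in $z_t$ (since $p_t^+\ge p_t^-\ge 0$) and $z_t$ is affine in $u_t$; and $pc_t$ is constant in $u_t$.

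For the expectation, fix a realization of $g_{t+1}$ and any two actions $u_t,u_t'$ with midpoint $\tilde u_t$. Concavity of $s_{t+1}$ in $e_t$ (valid because $\tau\rho\le 1$) gives $\tilde s_{t+1}\ge\tfrac{1}{2}(s_{t+1}+s_{t+1}')$, and convexity of $c_{t+1}=\max\{z_t,c_t\}$ in $u_t$ gives $\tilde c_{t+1}\le\tfrac{1}{2}(c_{t+1}+c_{t+1}')$. Chaining joint concavity of $\hat V_{t+1}^*$, its non-decreasing monotonicity in $s$, and its non-increasing monotonicity in $c$ yields
\begin{equation*}
\hat V_{t+1}^*(\tilde s_{t+1},g_{t+1},\tilde c_{t+1})\ge\tfrac{1}{2}\bigl(\hat V_{t+1}^*(s_{t+1},g_{t+1},c_{t+1})+\hat V_{t+1}^*(s_{t+1}',g_{t+1},c_{t+1}')\bigr),
\end{equation*}
and taking expectation over $g_{t+1}$ preserves concavity. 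The decisive insight is that the $pc$-shift flips the $c$-monotonicity of the continuation value to be compatible with the convex peak update $c_{t+1}=\max\{z_t,c_t\}$.
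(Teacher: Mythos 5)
Your proof is correct, but it takes a genuinely different route from the paper's. The paper proves \autoref{lem:Qfunction} exactly the way it proves \autoref{lem: concave}: it attaches to $u_t$ and $u_t'$ their optimal continuation trajectories, averages the \emph{entire} action sequences, and reuses the whole-horizon inequalities (\ref{eq: u})--(\ref{eq: charge}) --- in particular the telescoped identity that the remaining demand charge equals $p[\max_\tau z_\tau-c]^+$ --- to sandwich the averaged trajectory's reward between the mean of the two Q-values and $Q_t^*(x_t,(u_t+u_t')/2)$. You instead argue locally on the one-step Bellman decomposition, which requires one ingredient the paper never states: the Lipschitz bound $V_{t+1}^*(s,g,c+\delta)-V_{t+1}^*(s,g,c)\le p\delta$. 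That bound does follow from the same telescoping identity (your coupling of the $(c+\delta)$-optimal actions to the peak-$c$ state is sound because the realized $z_\tau$ do not depend on $c$, the same level of rigor the paper uses in its Lemma~2), and it is precisely what makes $\hat V_{t+1}^*=V_{t+1}^*-pc$ non-increasing in $c$ so that the convex peak update $c_{t+1}=\max\{z_t,c_t\}$ composes correctly with a concave, monotone continuation. What your route buys: it isolates the exact structural reason the convex $\max$ update does not destroy concavity (the paper handles this only implicitly through whole-trajectory averaging), it is modular term by term, and it would support a clean backward-induction treatment of Lemmas~1--3; you also make explicit the condition $\tau\rho\le 1$ needed for concavity of the SoC update, which the paper leaves tacit. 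What it costs is the extra Lipschitz lemma and the (routine) check that the midpoint action remains feasible, neither of which is a gap.
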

\begin{proof}
For $u_\tau\neq u_\tau'$, let $\{(x_t, u_t)\}_{t=\tau}^T$ be the trajectory of $Q_\tau^*(x_\tau, u_\tau)$ and $\{(x_t', u_t')\}_{t=\tau}^T$ be the trajectory of $Q_\tau^*(x_\tau, u_\tau')$. By (\ref{eq: u}) - (\ref{eq: charge}), applying actions $\{(u_t+u_t')/2\}_{t=\tau}^T$ to $x_\tau$ gives:
\begin{equation}
    \begin{aligned}
        &Q_\tau^*(x_\tau, u_\tau)+Q_\tau^*(x_\tau, u_\tau')\\
        \le&\mathbb E\left[\sum_{t=\tau}^Tr_t(x_t, (u_t+u_t')/2)|x_\tau\right]\\
        \le& Q_\tau^*(x_\tau, (u_\tau+u_\tau')/2).
    \end{aligned}
\end{equation}
    
Therefore, $Q_t^*(x_t, u_t)$ is concave in $u_t$.
\end{proof}

\autoref{lem: concave} to \autoref{lem:Qfunction} lead to \autoref{thm: opt} about the structure of the optimal co-optimization policy under demand charges.
\begin{theorem}\label{thm: opt}
    For each step $t=0, 1, \cdots, T-1$, the optimal policy is a threshold policy. 
\end{theorem}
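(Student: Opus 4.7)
The plan is to leverage the Bellman recursion $V_t^*(x_t)=\max_{u_t}Q_t^*(x_t,u_t)$ together with Lemma~3 (concavity of $Q_t^*$ in $u_t$) to characterize the maximizer by KKT conditions on the compact convex action set $[-\underline{e},\overline{e}]\times\mathcal{D}$. The threshold structure should emerge from the piecewise-linear form of the payment $P_t(z_t,c_t)$ in the scalar $z_t=\bm{1}^\top\boldsymbol{d}_t+e_t-g_t$, whose breakpoints $z_t=0$ and $z_t=c_t$ carve the action space into three polyhedral regions: net export ($z_t<0$), net import below the running peak ($0\le z_t\le c_t$), and establishment of a new peak ($z_t>c_t$).

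Within each region, $P_t$ is affine in $u_t$, and $s_{t+1}$ and $c_{t+1}$ are affine in $u_t$, so the continuation value $\mathbb{E}[V_{t+1}^*(x_{t+1})]$ inherits concavity from Lemma~1 after expectation over $g_{t+1}$. Stationarity in each $d_t^i$ gives $U_t^{i\prime}(d_t^i)=\lambda_t$, where the effective marginal price $\lambda_t$ takes the value $p_t^-$, $p_t^+$, or $p_t^++p$ depending on the active region, possibly offset by the state-dependent shadow prices $\mathbb{E}[\partial_s V_{t+1}^*]$ and $\mathbb{E}[\partial_c V_{t+1}^*]$ supplied by Lemma~1. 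Combined with $0\le d_t^i\le\overline{d}^i$, this produces the threshold policy $d_t^{i*}=\min\bigl\{\overline{d}^i,\max\bigl\{0,(U_t^{i\prime})^{-1}(\lambda_t)\bigr\}\bigr\}$. An analogous stationarity condition for $e_t$ compares the current marginal cost (weighted by $\tau$ when charging and $1/\rho$ when discharging) against the expected marginal continuation value of stored energy, yielding charge/discharge thresholds that depend only on $(s_t,g_t,c_t)$. Lemmas~1 and~2 ensure the relevant shadow prices are non-negative and monotone, so the thresholds are well ordered.

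The main obstacle will be the coupling across the components of $u_t$: the entire vector enters $P_t$ only through the scalar aggregate $z_t$, so the per-device multipliers are all tied to a single common $\lambda_t$, and battery output $e_t$ can substitute for $\bm{1}^\top\boldsymbol{d}_t$ in pushing $z_t$ across region boundaries. I plan to handle this via an inner/outer decomposition: for each fixed value of $z_t$, distribute consumption across devices via the per-device stationarity conditions above (a concave inner problem whose solution is itself a threshold in each $d_t^i$), then optimize the scalar $z_t$ against the piecewise-linear cost plus the concave continuation value. Concavity guarantees a unique active region at optimality (up to ties on boundaries), so the global maximizer is recovered by comparing the three region-wise optima, which is itself a threshold comparison on the state. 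This construction both establishes the theorem and makes explicit why enumeration over active regions scales combinatorially in $K$, foreshadowing the paper's complexity remark.
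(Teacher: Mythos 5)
Your argument is correct and rests on the same two pillars as the paper's proof---concavity of $Q_t^*$ in $u_t$ (Lemma~\ref{lem:Qfunction}) together with compactness and convexity of the action set---but you carry it considerably further. The paper stops at the observation that a concave function on a compact set attains its maximum either in the interior or on the boundary, and simply asserts that solving the concave programs with and without the boundary constraints ``results in a threshold policy''; it never identifies the thresholds. You make the structure explicit: the kinks of $P_t$ at $z_t=0$ and $z_t=c_t$ induce regions with effective marginal prices $p_t^-$, $p_t^+$, and $p_t^++p$; per-device stationarity yields the water-filling rule $d_t^{i*}=\min\bigl\{\overline{d}^i,\max\bigl\{0,(U_t^{i\prime})^{-1}(\lambda_t)\bigr\}\bigr\}$; and the inner/outer decomposition over the scalar $z_t$ resolves the coupling between battery and demands while also explaining the combinatorial cost in $K$ that the paper only alludes to. Two small repairs are needed. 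First, the SoC update $s_{t+1}=s_t+\tau[e_t]^+-[e_t]^-/\rho$ has an additional kink at $e_t=0$, so your claim that the dynamics are affine ``within each region'' requires conditioning on the sign of $e_t$ as well as on the position of $z_t$; the region count grows accordingly but the argument is unchanged. Second, $V_{t+1}^*$ is only known to be concave, not differentiable, so the shadow prices $\mathbb{E}[\partial_s V_{t+1}^*]$ and $\mathbb{E}[\partial_c V_{t+1}^*]$ should be read as supergradients and the stationarity conditions as first-order optimality conditions for nonsmooth concave maximization. Neither issue affects the conclusion, and your version is strictly more informative than the paper's.
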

\begin{proof}
    The feasible region for action $u_t$, $(e_t, \boldsymbol{d}_t)\in \mathcal F_t= [\max\{-\underline{e}, -s_t\}, \min\{\overline{e}, B-s_t\}]\times\prod_{i=1}^K[0, d_i]$ is a compact set. From \autoref{lem: concave} to \autoref{lem:Qfunction}, the optimal value function and Q-function is concave in action $u_t$, so the optimal action
    \begin{equation}
        u_t^*=\arg\max_{u\in\mathcal F_t} r_t(x_t, u_t) + Q_t^*(x_t, u_t)
    \end{equation}
    must exist in the interior of the feasible set $\mathcal F_t$ or at the boundary of the feasible set $\mathcal F_t$. The optimal action can be obtained by solving concave optimizations with and without boundary constraints, which results in a threshold policy.
\end{proof}
Although \autoref{thm: opt} reveals a structured policy, direct dynamic programming implementation remains impractical. First, optimization necessitates gradients of $r_t$ and $Q_t$, which are computationally prohibitive for complex utilities. Also, continuous-space Q-function tracking requires costly discretization, as validated in simpler settings~\cite{jin_optimal_2017}. We thus employ reinforcement learning (RL) to approximate the optimal policy efficiently.  
\section{Case Study}
\label{sec:case_study}
We investigate the effectiveness and efficiency of RL in solving this co-optimization problem. We compare Proximal Policy Optimization (PPO)~\cite{schulman2017proximal} against Tesla's backup mode~\cite{tesla_powerwall_modes}, a threshold algorithm, and the optimal nonlinear program in simulation. Experiments were run on a server with 1 AMD Ryzen Threadripper 3990X 64-Core Processor and 4 Nvidia RTX A4000 GPUs.

\subsection{Problem Setup}
We adopt the concave utility function from \cite{alahmed_integrating_2022}: $U_t(d_t)=\alpha_t d_t - \frac{1}{2}\beta_td_t^2$, where $\alpha_t$, $\beta_t$ dynamically reflect the electricity costs and demand elasticity. The elasticity is set to -0.1 for an inelastic user~\cite{asadinejad2018evaluation}. Baseline battery parameters are set as $B=5$ kWh, $\underline{e}=\overline{e}=1$ kW with efficiencies $\tau=\rho=0.95$. Electricity price is assumed static of $\$0.12/\text{kWh}$ for buying, $\$0.06/\text{kWh}$ for selling, and $\$10/\text{kW}$ for daily peak demand charge.


\subsection{Control Strategies}

\paragraph{Baseline}
The Tesla's backup mode maintains a full battery by charging and only discharges during grid outages. It shows what would happen without a battery system~\cite{jin_optimal_2017}. 


\paragraph{Reinforcement Learning using PPO}
The PPO algorithm uses an actor network and a critic network with the same Multilayer Perceptron (MLP) structure. For this case study, we used the default MLP policy from stable baseline~\cite{stable-baselines3}.

\paragraph{Threshold Algorithm}
Based on the optimal structure, we developed a simple threshold algorithm as follows:
\begin{equation}
    e_t=\begin{cases}
        \max\{-\underline{e}, \rho s_t, -(\bar{\boldsymbol{d}}_t - g_t)\}, & \bar{\boldsymbol{d}}_t - g_t>0\\
        \min\{\overline{e}, (B-s_t)/\tau, -(\bar{\boldsymbol{d}}_t - g_t)\} & \bar{\boldsymbol{d}}_t - g_t\le 0
    \end{cases}
\end{equation}
and $\boldsymbol{d}_t=\bar{\boldsymbol{d}}_t$ in both cases.

\paragraph{Nonlinear Program}
Formulate the battery-demand co-optimization as a deterministic nonlinear program solved with Gurobi. It provides the optimal value for comparison.

\subsection{Dataset}
\begin{figure}[t]
    \centering
    \includegraphics[width=0.95\linewidth]{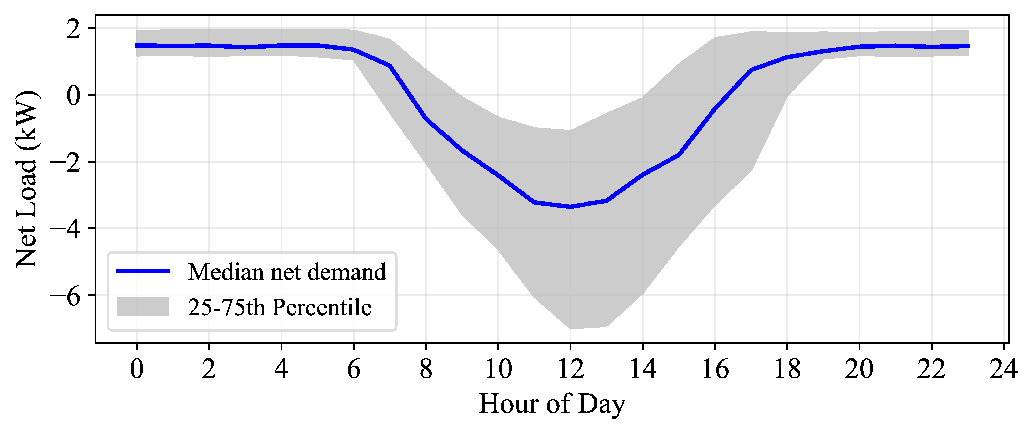}
    \caption{Daily net demand data from June 2022 to May 2024}
    \label{fig:net_demand}
\end{figure}
The Harvard HouseZero dataset \cite{han_two-year_2024} provides hourly demand and solar generation (June 2022--May 2024). \autoref{fig:net_demand} illustrates the net demand distribution. The PPO agent was trained on daily data (June 2022--April 2024). All test cases are generated from May 2024 daily data.  


\subsection{Results}

\begin{table*}[t]
\centering
\setlength{\tabcolsep}{6pt}
\caption{Reward (surplus percentage gain) of different methods under three different demand and generation scenarios}
\label{table:result_compare_demand_gen}
\begin{tabular}{ccccccccc}
\toprule
\multicolumn{4}{c}{Scenario} & \multicolumn{4}{c}{Method} \\
\cmidrule(r){1-4}  \cmidrule(r){5-8}
Generation & Demand & Battery Capacity & Charge/Discharge Limits & Backup & Threshold & PPO & Optimal \\ 
\midrule
25th Percentile & 75th Percentile & 5kWh & 1kW & 8.7 (0\%) & 10.3 (18.4\%) & 15.7 (80.5\%) & 20.3 (133.3\%) \\
50th Percentile & 50th Percentile & 5kWh & 1kW & 10.4 (0\%) & 11.9 (14.4\%) & 17.2 (65.4\%) & 21.9 (110.6\%) \\
75th Percentile & 25th Percentile & 5kWh & 1kW & 10.7 (0\%)& 12.2 (14.0\%) & 17.0 (58.9\%) & 22.0 (105.6\%) \\
\midrule
50th Percentile & 50th Percentile & 3kWh & 1kW & 10.4 (0\%) & 11.2 (7.7\%) & 14.8 (42.3\%) & 19.8 (90.4\%) \\
50th Percentile & 50th Percentile & 7kWh & 1kW & 10.4 (0\%)& 21.5 (106.7\%) & 17.9 (72.1\%) & 22.5(116.3\%) \\
\midrule
50th Percentile & 50th Percentile & 5kWh & 0.5kW & 10.4 (0\%)& 16.0 (53.8\%) & 16.2 (55.8\%) & 20.2 (94.2\%) \\
50th Percentile & 50th Percentile & 5kWh & 2kW & 10.4 (0\%)& 11.9 (14.3\%) & 18.5 (77.9\%) & 21.9 (110.6\%) \\
\bottomrule
\end{tabular}%

\end{table*}


\begin{figure*}[th]
    \centering
    \subfigure[Battery SoC comparison]{
    \includegraphics[width=0.31\linewidth]{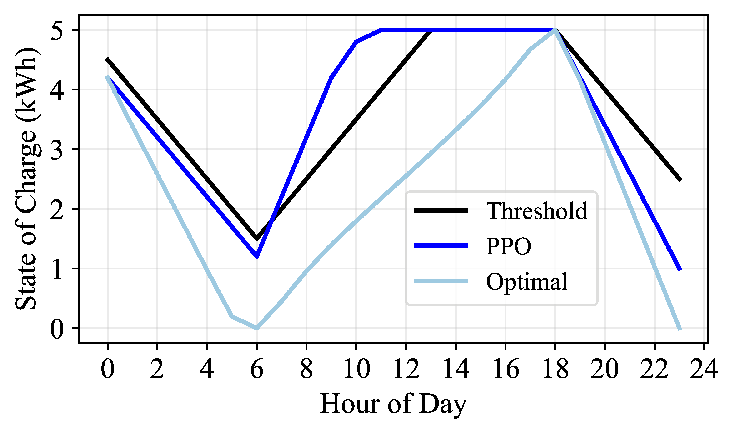}
    }
    \subfigure[Battery action comparison]{
    \includegraphics[width=0.31\linewidth]{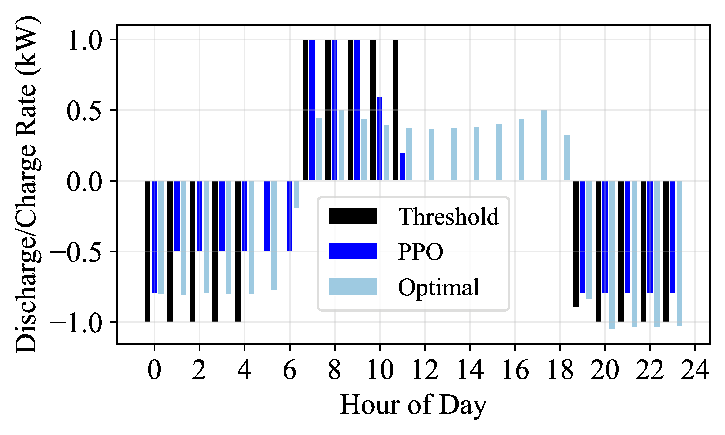}
    }
    \subfigure[User demand action comparison]{\includegraphics[width=0.31\linewidth]{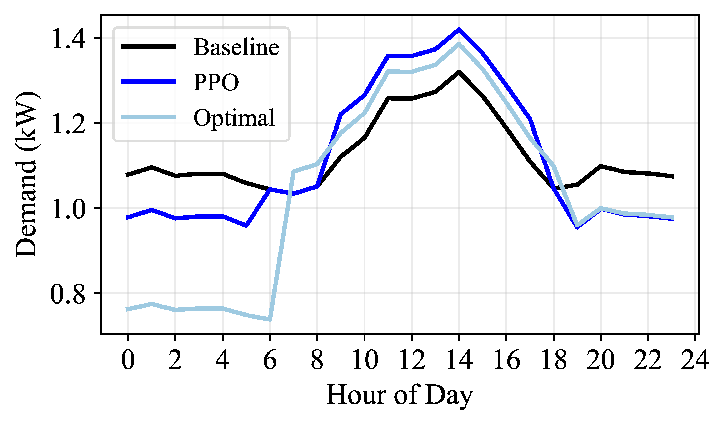}
    }
    \caption{Typical day battery state of charge and demand comparison between different control methods}
    \label{fig:medium}
\end{figure*}


We evaluate seven scenarios by varying:
\begin{itemize}
    \item \textbf{Generation/Demand}: 25th–75th percentile solar generation paired with 75th–25th percentile demand to simulate energy-abundant/scarce conditions;
    \item \textbf{Battery Capacity}: 3kWh - 7kWh capacity to simulate small/medium/large battery size.
    \item \textbf{Battery Charge/Discharge Limit}: 0.5 kW - 2kW to simulate low charging/discharging rates to high rates.
\end{itemize}

\autoref{table:result_compare_demand_gen} shows the performance comparisons across all scenarios. The results are evaluated using reward metrics, with percentage gains calculated relative to the Backup baseline. Under varying generation-demand scenarios with fixed battery parameters (5kWh, 1kW), PPO dominates the threshold method with 58.9-80.5\% versus 14.0-18.4\% gains, demonstrating adaptability to varying generation-demand conditions. The threshold algorithm is sensitive to the capacity, which is improved significantly (106.7\% gain) due to reduced demand-side dependency when the capacity is large, where the optimal and threshold policies are similar. By contrast, PPO maintains the robustness. Higher limits also amplify potential gains. While PPO follows the trend, the threshold shows over-aggressive control, degrading performance significantly. 

\autoref{fig:medium} shows the state and actions sequence of medium generation/demand and (5kWh, 1kW) battery configuration. Both threshold and RL policies align closely with optimal strategies in the scenario while being more aggressive in battery actions when lacking generation predictions. Key behavioral patterns include battery charging and demand elevation in the daytime to exploit solar surplus, and battery discharging and demand reduction in the evening to mitigate grid reliance.

\section{Conclusion}
\label{sec:conclusion}
In conclusion, we presented a comprehensive investigation into the co-optimization of DERs in a realistic setting controlling both demands and energy storage under a bi-directional net metering pricing structure, including demand charge. We proposed a stochastic dynamic program to model the problem and proved the optimality of the threshold structure. Through simulations with real-world data, we demonstrated that the RL algorithm can achieve significant surplus gains compared to baseline approaches, highlighting the effectiveness of using RL to approximate optimal control for prosumer EMS.

\bibliographystyle{IEEEtran}
\bibliography{ref}


\end{document}